\def\E{\mathbb{E}}
\newcommand{\pr}{\stackrel{\mathcal{P}}{\longrightarrow}}
\def\var{\text{Var}}
\def\cov{\text{Cov}}
\newtheorem{thm}{Theorem}
\newtheorem{alg}{Algorithm}
\begin{document}

\title{Multivariate Control Charts based on Bayesian State Space Models}

\author{K. Triantafyllopoulos\footnote{Department of Probability and Statistics,
Hicks Building, University of Sheffield, Sheffield S3 7RH, UK,
Email: {\tt k.triantafyllopoulos@sheffield.ac.uk}, Tel: +44 114 222
3741, Fax: +44 114 222 3759.}}

\date{31 January 2006}

\maketitle

\begin{abstract}

This paper develops a new multivariate control charting method for
vector autocorrelated and serially correlated processes. The main
idea is to propose a Bayesian multivariate local level model, which
is a generalization of the Shewhart-Deming model for autocorrelated
processes, in order to provide the predictive error distribution of
the process and then to apply a univariate modified EWMA control
chart to the logarithm of the Bayes' factors of the predictive error
density versus the target error density. The resulting chart is
proposed as capable to deal with both the non-normality and the
autocorrelation structure of the log Bayes' factors. The new control
charting scheme is general in application and it has the advantage
to control simultaneously not only the process mean vector and the
dispersion covariance matrix, but also the entire target
distribution of the process. Two examples of London metal exchange
data and of production time series data illustrate the capabilities
of the new control chart.

\textit{Some key words:} time series, SPC, multivariate control
chart, state space model, EWMA.

\end{abstract}

\section{Introduction}\label{s1}

In the last decades multivariate Statistical Process Control (SPC)
has received considerable attention, since in practice many
processes are observed in a vector form (Montgomery$^1$). Univariate
control charts have been extensively discussed in the literature
(Montgomery$^1$, Box and Luce{\~ n}no$^2$, del Castilo$^3$) and many
efforts have been devoted to upgrading the control charts for: (a)
cases of correlated univariate processes; and (b) cases of
multivariate uncorrelated processes.

Multivariate control charting has been discussed in many studies,
e.g. Tracy {\it et al.}$^4$, Liu$^5$, Kourti and MacGregor$^6$,
Mason {\it et al.}$^7$, Vargas$^8$, Ye {\it et al.}$^9$ and
Pan$^{10}$ among many others. Review papers on multivariate
control charts include Lowry and Montgomery$^{11}$, Sullivan and
Woodall$^{12}$, Montgomery and Woodall$^{13}$, Bersimis {\it et
al.}$^{14}$ and Yeh {\it et al.}$^{15}$. Most of the current
research has been focused on the Hotelling's $T^2$ control chart
and the multivariate EWMA control chart for controlling the
process mean. Yeh {\it et al.}$^{16}$, Surtihadi {\it et
al.}$^{17}$, Cheng and Thaga$^{18}$ and Costa and Rahim$^{19}$
propose and study multivariate EWMA and CUSUM control charts to
control the dispersion of a multivariate process. As stated before
univariate control charts for autocorrelated processes have been
discussed in the literature (Montgomery$^1$, Box and Luce{\~
n}no$^2$), however, for multivariate processes the general focus
has been placed to uncorrelated processes. Dyer {\it et
al.}$^{20}$, Jiang$^{21}$, Kalgonda and Kulkarni$^{22}$ and
Noorossana and Vaghefi$^{23}$ consider multivariate control
charting for autocorrelated processes based on
autoregressive-moving-average (ARMA) time series models and the
$T^2$ and multivariate CUSUM control charts are illustrated. Pan
and Jarrett$^{24}$ build a multivariate $T^2$ control chart for
the forecast errors of the process. They consider a state-space
approach for modelling the underlying process and they point out
that the problem of monitoring multivariate processes is a problem
of multivariate time series forecasting as well as a problem of
control charting. Some forms of Bayesian control charts, known
also as adaptive or dynamic control charts, are discussed in
Tagaras$^{25}$, Tagaras and Nikolaidis$^{26}$, de Magalh{\~ a}es
{\it et al.}$^{27}$ and in references therein. Adaptive control
charts offer the flexibility and versatility to dynamically change
the sampling size and the sampling interval of a Shewhart control
chart, but they are disadvantaged in that the complexity is
increased and usually the modeller has to resort to Monte Carlo
simulation.

Our aim in this paper is to construct a multivariate control chart
for autocorrelated processes in such a way that the scheme will be
capable to monitor the process mean vector only, the process
dispersion covariance matrix only, or both the process mean vector
and the process dispersion covariance matrix. We propose a new
control chart based on the theory of sequential Bayes' factors
(West and Harrison$^{28}$). First we fit a local level model to
the multivariate process and then we apply a univariate modified
EWMA control chart to the logarithm of the Bayes' factor to
monitor the dispersion of the predictive distribution of the data
from the target distribution. Our model makes use of a
generalization of the Shewhart-Deming model for multivariate
autocorrelated processes (Deming$^{29}$, del Castilo$^3$,
Triantafyllopoulos {\it et al.}$^{30}$).

Section \ref{s2} gives the necessary time series background. The
proposed control chart is discussed in detail in Section
\ref{s2s2}. In Sections \ref{s5s2} and \ref{s5s1} two examples,
consisting of data from the London metal exchange and from a
production of a plastic mould, illustrate the methodology and give
light to the design and implementation of the new control chart.
Concluding comments are given in Section \ref{s6} and the appendix
details a proof of an argument in Section \ref{s2s2}.

\section{Background}\label{s2}

The conventional control charts are based on the Shewhart-Deming
model, e.g. for a $p\times 1$ process vector $y_t$ this model sets
\begin{equation}\label{eq1}
y_t=\mu +\epsilon_t,\quad \epsilon_t \sim \mathcal{N}_p(0,\Sigma),
\end{equation}
where $\mu$ is the process mean vector and $\Sigma$ is the process
dispersion covariance matrix, known also as the measurement
covariance matrix. Here $\mathcal{N}_p(0,\Sigma)$ indicates the
$p$-dimensional normal distribution with mean vector zero and
covariance matrix $\Sigma$. The measurement drift sequence
$\{\epsilon_t\}$ is assumed uncorrelated and this makes the
generating process $\{y_t\}$ an uncorrelated sequence too. In this
paper we extend the above model by considering equation
(\ref{eq1}), but now $\mu$ is replaced by a time-dependent
$\mu_t$, which follows a multivariate random walk model, known
also as local level model (Durbin and Koopman$^{31}$).

Discount Weighted Regression (DWR), which originated in the
path-breaking work of Brown$^{32}$, is a method for forecasting
autocorrelated time series. Considering univariate time series
Ameen and Harrison$^{33}$ developed further DWR for more complex
time series. The reviews of Ameen$^{34}$ and Goodwin$^{35}$
suggest that DWR can model efficiently time series in a wide range
of situations. Triantafyllopoulos and Pikoulas$^{36}$ developed a
multivariate version of DWR and these authors focused on the
estimation of the measurement covariance matrix. In this paper we
consider the DWR method of Triantafyllopoulos$^{37}$ for
multivariate local level models defined by
\begin{equation}\label{eq2}
y_t=\mu_t +\epsilon_t\quad \textrm{and} \quad
\mu_t=\mu_{t-1}+\omega_t,
\end{equation}
where $\epsilon_t \sim \mathcal{N}_p(0,\Sigma)$ and $\omega_t\sim
\mathcal{N}_p(0,\Omega_t\Sigma)$. The scalar $\Omega_t$ is specified
with the aid of a discount factor $\delta$ and the sequences
$\{\epsilon_t\}$ and $\{\omega_t\}$ are mutually and individually
uncorrelated, e.g. $\E (\epsilon_i\epsilon_j')=\E
(\omega_k\omega_{\ell}')=\E (\epsilon_r\omega_s')=0$, for all $i\neq
j$, $k\neq \ell$ and for all $r,s$. Here $\E (\cdot)$ denotes
expectation and $\epsilon_j'$ denotes the row vector of
$\epsilon_j$. The model definition is complete by specifying a prior
distribution $p(\mu_0|\Sigma)$, which is usually the $p$-dimensional
normal distribution, e.g. $\mu_0|\Sigma\sim
\mathcal{N}_p(m_0,P_0\Sigma)$, for some known prior mean vector
$m_0$ and a positive scalar $P_0>0$. It is further assumed that
$\mu_0$ is uncorrelated of all $\omega_t$. For some positive integer
$N>0$, let $y^t =(y_1,y_2,\ldots,y_t)$ be the information set
comprising data up to and including time $t$, for $t=1,2,\ldots,N$.

With the prior $\mu_0|\Sigma\sim \mathcal{N}_p(m_0,P_0\Sigma)$, the
posterior density of $\mu_t|\Sigma,y^t$ is $\mu_t|\Sigma,y^t\sim
\mathcal{N}_p(m_t,P_t\Sigma)$, where $m_t$ and $P_t$ are updated by
\begin{equation}\label{eq:mt}
m_t=m_{t-1}+\frac{P_{t-1}}{\delta+P_{t-1}}e_t=\frac{\delta
m_{t-1}+P_{t-1}y_t}{\delta+P_{t-1}}\quad \textrm{and}\quad
P_t=\frac{1}{\delta+P_{t-1}},
\end{equation}
with $e_t=y_t-\E (y_t|y^{t-1})=y_t-m_{t-1}$ being the one-step
forecast error vector at time $t-1$. Define the residual error
vector $r_t=\E (\epsilon_t|y^t)=y_t-m_t$. For each time $t$ the
estimator $S_t$ of $\Sigma$ is achieved by least squares estimation
as
\begin{equation}\label{eq:var}
S_t=\frac{1}{t}\sum_{i=1}^t
r_ie_i'=\frac{1}{t}\sum_{i=1}^t\frac{\delta
e_ie_i'}{\delta+P_{i-1}},
\end{equation}
after observing that
$$
r_t=y_t-m_t=y_t-m_{t-1}-\frac{P_{t-1}e_t}{\delta+P_{t-1}}=e_t-\frac{P_{t-1}e_t}{\delta+P_{t-1}}
= \frac{\delta e_t}{\delta + P_{t-1}}.
$$
Details of the derivations of $m_t$, $P_t$ and $S_t$ appear in
Triantafyllopoulos and Pikoulas$^{36}$ and
Triantafyllopoulos$^{37}$.

From the above it follows that the one-step forecast density is
$$
y_{t+1}|\Sigma=S_t,y^t\sim
\mathcal{N}_p\left\{m_t,\frac{(\delta+P_t)S_t}{\delta}\right\}
$$
and the corresponding one-step forecast error density is
\begin{equation}\label{eq5}
e_{t+1}|\Sigma=S_t,y^t\sim
\mathcal{N}_p\left\{0,\frac{(\delta+P_t)S_t}{\delta}\right\},
\end{equation}
where $e_{t+1}=y_{t+1}-\E (y_{t+1}|y^t)=y_{t+1}-m_t$.

The adequacy of the model is evaluated via the mean of squared
standard one-step forecast error vector (MSSE), the mean of
absolute percentage one-step forecast error vector (MAPE) and the
mean of absolute one-step forecast error vector (MAE). These
statistics are discussed in Chatfield$^{38}$ and for data
$y_1,y_2,\ldots,y_N$ they are defined by
\begin{gather*}
MSSE=\frac{1}{N}\sum_{t=1}^N\left[(e_{1t}^*)^2~ (e_{2t}^*)^2 ~\cdots
~ (e_{pt}^*)^2\right]',\quad
e_t^*=\left\{\frac{(\delta+P_{t-1})S_{t-1}}{\delta}\right\}^{-1/2}e_t,\\
MAPE=\frac{1}{N}\sum_{t=1}^N\left[\frac{|e_{1t}|}{y_{1t}}~
\frac{|e_{2t}|}{y_{2t}}~ \cdots ~\frac{|e_{pt}|}{y_{pt}}
\right]',\quad MAE=\frac{1}{N}\sum_{t=1}^N\left[|e_{1t}|~|e_{2t}|~
\cdots ~ |e_{pt}|\right]',
\end{gather*}
where $e_t^*$ is the standard one-step forecast error,
$y_t=[y_{1t}~y_{2t}~\cdots~y_{pt}]'$, $e_t=[e_{1t}~e_{2t}~ \cdots
~e_{pt}]'$ and $\{\delta^{-1}(\delta+P_{t-1})S_{t-1}\}^{-1/2}$
denotes the inverse of the symmetric square root of the matrix
$\delta^{-1}(\delta+P_{t-1})S_{t-1}$ based on the spectral
decomposition of symmetric matrices (Gupta and Nagar$^{39}$; pages
6-7). If the model fit is good the MSSE should be close to the
vector $[1~1~\cdots~1]'$, while MAPE and MAE should be as small as
possible in absolute value. Note that the MAPE, as a percentage
statistic, makes sense only for a positive valued process $y_t$,
for all $t$. If this is not the case, then MAPE can not have a
meaningful interpretation and it should be excluded from the
statistical analysis (Chatfield$^{38}$).

\section{The Bayesian Control Chart}\label{s2s2}

\subsection{The Main Idea}

Bayes' factors have been extensively discussed in the statistics
literature and recently they have been applied sequentially for
time series, see e.g. West and Harrison$^{28}$ (Chapter 11).
Salvador and Gargallo$^{40}$ propose a monitoring scheme, based on
Bayes' factors, for multivariate time series, but this approach is
not suitable for control charting, because it is applied in a
model selection problem. In addition to this, most of the Bayesian
time series monitoring (including the work of Salvador and
Gargallo$^{40}$) relies upon simulated based methods and in
particular Monte Carlo simulation. In this paper we favour
non-iterative techniques, because they are faster, more flexible
and easier to apply.

Once we have the distribution (\ref{eq5}) we can construct a
target distribution for the dispersion of $y_t$ from the target
mean and then compare these two distributions. It is well known
(see e.g. Pan and Jarrett$^{24}$) that the forecast errors $e_i$
and $e_j$ $(i\neq j)$ are approximately uncorrelated and the
approximation is so good as $S_t$ is closer to $\Sigma$. Suppose
now that the target mean of $\{y_t\}$ is denoted by $\mu$ and the
process dispersion covariance matrix is denoted by $V$. This
notation is consistent with the Shewhart-Deming model as in
equation (\ref{eq1}), with $V=\Sigma$ so that $\E (y_t)=\mu$ and
$\textrm{Var}(y_t)=\Sigma$, where $\textrm{Var}(y_t)$ denotes the
covariance matrix of $y_t$. Is is assumed that $\mu$ is a
generally unknown vector, but not stochastic. In our model of
equation (\ref{eq2}) we have $\E (y_t|\mu_t)=\mu_t$ and
$\textrm{Var}(y_t|\mu_t)=\Sigma$, but now $\mu_t$ is stochastic
and it also changes with time according to the random walk model
of (\ref{eq2}). We postulate that, if the process is in control,
the one step forecast mean of $y_t$ will be close to the target
mean vector $\mu$ and the forecast covariance matrix of $y_t$ will
be close to the target dispersion covariance matrix $V$. Thus we
can define the target error distribution by $\varepsilon_t\sim
N_p(0,V)$, where $\varepsilon_t=y_t-\mu$ is the process error,
also known in the process adjustment literature (del Castillo$^3$)
as disturbance drift. Here we assume that $V$ is positive definite
matrix, although the proposed approach can be modified when $V$ is
positive semi-definite. According to the above postulate, if model
(\ref{eq1}) describes well the in-control process, density
(\ref{eq5}) should be close to the above target distribution. In
order to find out ``how close" it is, we form the Bayes' factor at
time $t$:
$$
BF(t)=\frac{f_{e}(t)}{f_{\varepsilon}(t)}=\frac{f_{e}(e_t|\Sigma=S_{t-1},y^{t-1})}
{f_{\varepsilon}(\varepsilon_t)},\quad t=1,2,\ldots,N,
$$
where $f_e(t)$ and $f_\varepsilon (t)$ denote the probability
density functions of $e_t$ and $\varepsilon_t$, respectively.

For consistency in the above equation we need to make the convention
$y^0=\emptyset$ (the null or empty set). Since both densities
$f_{e}(t)$ and $f_{\varepsilon}(t)$ are normal we have
\begin{eqnarray}
BF(t)&=&\sqrt{\frac{\delta^p\det{(V)}}{(\delta+P_{t-1})^p\det{(S_{t-1})}}}\exp\left\{
(y_{t}-\mu)'V^{-1}(y_{t}-\mu)/2  \right. \nonumber \\ && \left. -
\delta (y_{t}-m_{t-1})'S_{t-1}^{-1}(y_{t}-m_{t-1})/(2\delta
+2P_{t-1}) \right\}, \label{eq6}
\end{eqnarray}
where $\det(\cdot)$ denotes the determinant of a square matrix. The
Bayes' factor $BF(t)$ takes values from $0$ to $+\infty$. We will
say that the process $y_{t}$ is in control at time $t$, if $f_e(t)=
f_\varepsilon (t)$, or if $BF(t)= 1$; otherwise the process will be
out of control, at this time point. An out of control signal might
be caused because of a mean shift (e.g. when $\E
(y_t|y^{t-1})=m_{t-1}$ is significantly different than $\mu$) or
because of a dispersion shift (e.g.
$\textrm{Var}(y_{t}|\Sigma=S_{t-1},y^{t-1})=(\delta+P_{t-1})S_{t-1}/\delta$
is significantly different than $V$).

\subsection{The Modified EWMA Control Chart for Correlated Data}

A control chart for the Bayes' factor $BF(t)$ can conclude whether
$BF(t)$ is close to 1 and thus whether the process is in control or
not. Since $BF(t)$ is positive valued, it is more convenient to work
with the logarithm of the Bayes' factor
\begin{eqnarray}
LBF(t)&=&\log BF(t) =p\log \delta /2+\{\log \det{(V)}\}/2 -p\{\log
(\delta+ P_{t-1})\}/2 - \nonumber \\ && \{\log \det{(S_{t-1})}\}/2 +
(y_{t}-\mu)'V^{-1}(y_{t}-\mu)/2 \nonumber \\ &&  - \delta
(y_{t}-m_{t-1})'S_{t-1}^{-1}(y_{t}-m_{t-1})/(2\delta
+2P_{t-1})\label{eq7}
\end{eqnarray}
and so we can construct an appropriate univariate control chart
for $LBF(t)$. In order to propose such a chart we need to deal
with two issues: (a) the values of $LBF(t)$ will be serially
correlated and (b) the distribution of $LBF(t)$ might not be
normal.

Considering (a), in our development it is clear that, from the
definition of the $BF(t)$, either the original data $y_t$ are
i.i.d. or auto-correlated, the resulting data $BF(t)$ (or
$LBF(t)$) will be correlated and hence, if the Shewhart or any
other control chart is to be used successfully, they should be
modified appropriately to accommodate for correlated observations.
Many authors have demonstrated that the Shewhart control charts
need to be modified in order to cater for serially correlated
observations (Vasilopoulos and Stamboulis$^{41}$; Schmid$^{42}$).
Similarly, the EWMA needs also to be modified and the resulting
modified EWMA control chart has been discussed in many articles
including Schmid$^{43}$ and VanBrackle and Reynolds$^{44}$.
According to Harris and Ross$^{43}$ ignoring serial correlation
has a stronger effect in EWMA than in the Shewhart control chart,
but as we will see later the EWMA control chart is preferable to
Shewhart, because it is more robust to the assumption of
normality. One could also consider the modified CUSUM chart for
correlated observations, but we will not further discuss this in
the present paper.

Proceeding with (b) one needs to check the assumption of
normality, before applying a modified EWMA (or Shewhart or CUSUM)
control chart. Borror {\it et al.}$^{46}$ studied the ARL
performance of the EWMA and they suggested that the EWMA with a
smoothing parameter equal to 0.05 is very effective, even in the
presence of non-normality of the observations. This result agrees
with Montogomery$^1$ who states for the EWMA ``It is almost a
perfectly non-parametric (distribution free) procedure''.
Maravelakis {\it et al.}$^{47}$ study the robustness to normality
of the EWMA by tabulating characteristics of the run length
distributions (e.g. ARL) for observations generated by several
gamma distributions. These results conclude that, for relatively
low values of the damping parameter of the EWMA and for shifts in
the mean the EWMA control chart can be used, even in the absence
of normality. Moreover, if the process is in-control following a
symmetrical, but not normal, distribution, then the EWMA can be
applied successfully. To the following we look at the empirical
distribution of $LBF(t)$ when the process is in control and when
it is out of control.

We generate 1000 vectors from a bivariate normal distribution
$\mathcal{N}_2(\mu,V)$ with
$$
\mu=\left[\begin{array}{c} 0 \\ 0\end{array}\right] \quad
\textrm{and} \quad V=\left[\begin{array}{cc} 1 & 2 \\ 2 &
5\end{array}\right]
$$
and we generate 1000 vectors for three out of control scenarios.
In scenario 1 we simulate data from $\mathcal{N}_2(\mu_d,V)$
(deviations from the mean $\mu$); in scenario 2 we simulate data
from $\mathcal{N}_2(\mu,V_d)$ (deviations from the covariance
matrix $V$); in scenario 3 we simulate data from
$\mathcal{N}_2(\mu_d,V_d)$ (deviations from both $\mu$ and $V$),
where
$$
\mu_d=\left[\begin{array}{c} 0.5 \\
0\end{array}\right] \quad \textrm{and} \quad
V_d=\left[\begin{array}{cc} 1 & 2.5 \\ 2.5 & 8\end{array}\right].
$$
Figure \ref{hist1} shows the histograms of the $LBF(t)$ for the
above four scenarios (one in control and three out of control
scenarios). From this figure we observe that, although the
distribution of the $LBF(t)$ for the in-control process (panel (a)
in Figure \ref{hist1}) is not-normal, it is roughly symmetric. The
distributions of the $LBF(t)$ for the out of control processes
appear to be slightly skewed, but the histograms are not conclusive.
The important point is the non-normality of the $LBF(t)$ and the
symmetry of the distribution of the in-control process. This enables
us to make use of the modified EWMA control chart, but we note that
the modified CUSUM control chart can also be used. A more formal
confirmation of the non-normality of the distribution of $LBF(t)$
can be carried out by the using standard tests of normality,
however, here the histograms are deemed sufficient to declare the
non-normality of the distribution of $LBF(t)$.

\begin{figure}[t]
 \epsfig{file=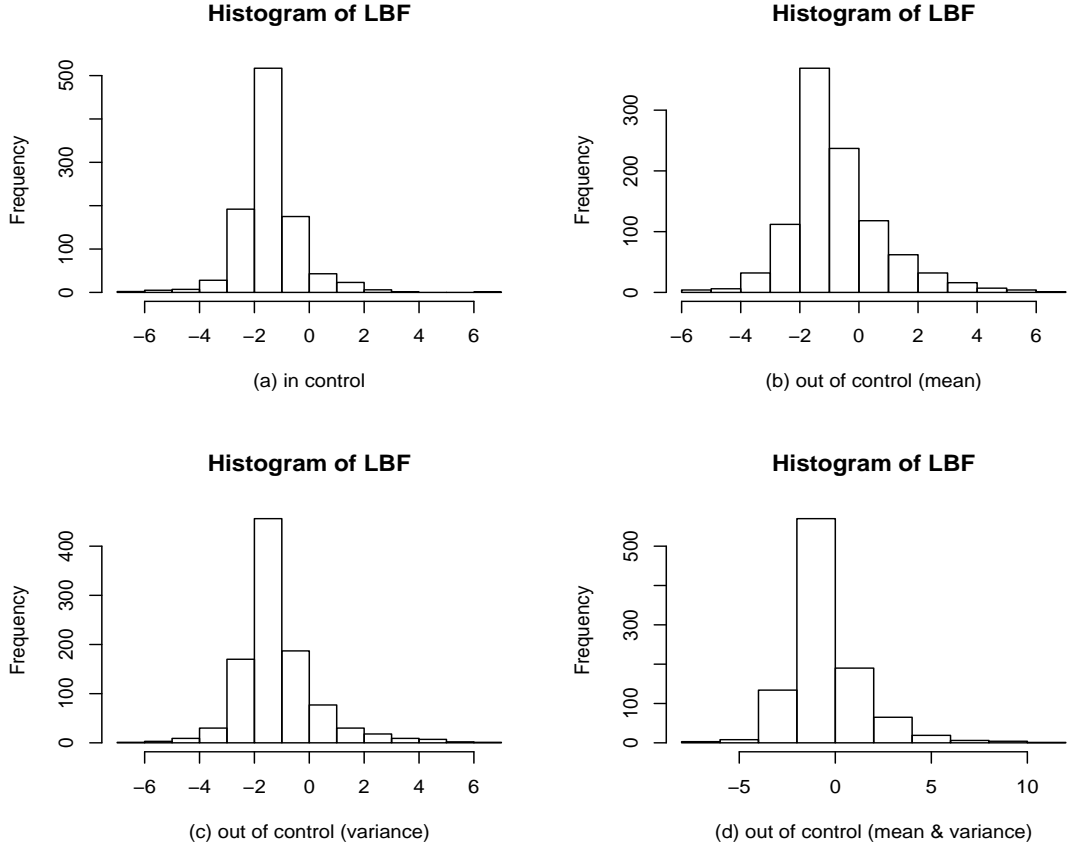, height=12cm, width=15cm}
 \caption{Histograms of the log Bayes' factor $LBF(t)$ for an in-control process (panel (a))
 and out-of-control processes (panels (b)-(d)). The out of control
 scenarios considered are deviations from the mean vector (panel
 (b)), deviations from the covariance matrix (panel (c)) and
 deviations from both the mean vector and the covariance matrix
 (panel (d)).}\label{hist1}
\end{figure}

We use a two phase control scheme; in Phase I the mean $\mu$ and the
covariance matrix $\Sigma$ are estimated and adjustments are applied
if necessary, while in Phase II the EWMA control chart is applied to
detect any changes in the mean of $LBF(t)$. Thus we propose the
algorithm:

\begin{alg}\label{alg3}
There are two phases:
\begin{description}
\item [Phase I:] We fit the DWR model (\ref{eq2}) for a set of
historical data $t=1,2,\ldots,N^*$, with $N^*<N$. We check the
performance and adequacy of the model via the MSSE, MAPE and MAE
over all $t=1,2,\ldots,N^*$ and we possible apply adjustments to the
DWR model, (e.g. adjustments in the mean level) so that we obtain
optimal values $m_{opt}=m_{N^*}$, $S_{opt}=S_{N^*}$,
$\delta=\delta_{opt}$ ensuring that in Phase I the model matches the
in-control process. The modified EWMA control chart is applied so
that control limits are adequately defined according to
pre-specified ARL curves. For this to be designed, a state-space
model for the process $LBF(t)$ needs to be identified and here
simple AR and ARMA modelling will be generally acceptable.
\item [Phase II:] We fit the DWR model with the model components from Phase I
(e.g. $\delta=\delta_{opt}$, $m_t=m_{opt}$, $\Sigma=S_{opt}$ and we
apply a modified EWMA control chart at observations $LBF(t)$ with
the control limits identified at Phase I, for
$t=N^*+1,N^*+2,\ldots,N$.
\end{description}
\end{alg}

In order to apply the modified EWMA control chart we first calculate
the series $z_t$ with observations $x_t=LBF(t)$ as
\begin{equation}\label{ewma}
z_t=\lambda x_t+(1-\lambda)z_{t-1},\quad 0<\lambda\leq 1.
\end{equation}
The parameter $\lambda$ is the EWMA smoothing parameter and as it
is mentioned above, for $\lambda=0.05$ or $\lambda=0.1$ the
control chart is robust to normality. Then, the control limits of
the modified EWMA control chart are
\begin{equation}\label{cl}
\mu_z\pm c\sigma_z,
\end{equation}
where $\mu_z=\E(z_t)$,
$\sigma_z^2=\lim_{t\rightarrow\infty}\var(z_t)$ (asymptotic variance
of $z_t$) and $c>0$ is determined according to the required ARL. For
AR(1) dependence $x_t=\phi x_{t-1}+\nu_t$ and for large $t$, the
asymptotic variance $\sigma_z^2$ is
$$
\sigma_z^2= \frac{\sigma^2\lambda \{1+\phi
(1-\lambda)\}}{(1-\phi^2)(2-\lambda)\{1-\phi (1-\lambda)\}},
$$
where $\nu_t\sim \mathcal{N}(0,\sigma^2)$ and $\sigma^2$, $\phi$
are assumed known. In practice these parameters are estimated at
Phase I. According to Schmid$^{43}$ the asymptotic variance
$\sigma_z^2$ performs better than the exact variance of $z_t$,
which is given in Schmid$^{43}$ and which produces time-dependent
control limits. Most of the literature on this topic focuses on
deriving the variance $\sigma_z^2$ assuming simple time series
models for $x_t$, e.g. as in the above AR(1) or as in the
ARMA(1,1) model considered in VanBrackle and Reynolds$^{44}$.

Algorithm \ref{alg3} can be simplified, if at Phase I, the
quantities $P_t$ and $S_t$ converge to stable values and these
values are determined in Phase I for both phases. This brings up a
well known problem, which has received considerable attention in
the time series literature (see e.g. Durbin and Koopman$^{31}$).
However, for the DWR and similar multivariate models limiting
results for $P_t$ and $S_t$ have not been yet established. The
next theorem (which proof is in the appendix) states that $P_t$
and $S_t$ converge to stable limiting values.

\begin{thm}\label{th2}
In the DWR model (\ref{eq2}) the estimator $S_t$ of the measurement
covariance matrix $\Sigma$ converges in probability to $\Sigma$ and
the non-stochastic scalar parameter $P_t$ converges to the limit
$P=(\sqrt{\delta^2+4}-\delta)/2$, i.e. $S_t\pr \Sigma$ and
$P_t\longrightarrow P$.
\end{thm}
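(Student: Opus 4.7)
The plan is to handle the two assertions separately, the convergence of $P_t$ being purely deterministic while $S_t \pr \Sigma$ is a probabilistic statement that piggy-backs on the first.

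For the deterministic recursion $P_t = 1/(\delta+P_{t-1})$, any fixed point satisfies $P(\delta+P)=1$, so the unique positive root is exactly the stated value $P=(\sqrt{\delta^2+4}-\delta)/2$, which also satisfies $P<1$ (since $\delta>0$) and the identity $\delta P=1-P^2$. The one-step map $f(x)=1/(\delta+x)$ is decreasing, so its orbit oscillates around $P$ and a direct contraction argument on $f$ is awkward. Instead I would study the two-step map
$$
g(x) \;=\; f(f(x)) \;=\; \frac{\delta+x}{\delta^2+\delta x+1},
$$
compute $g(x)-P$ over a common denominator, and use $\delta P=1-P^2$ to simplify the numerator to obtain the clean identity
$$
g(x) - P \;=\; \frac{P^2\,(x-P)}{\delta^2+\delta x+1}.
$$
For every $x\geq 0$ the denominator is at least $1$, so $|g(x)-P|\leq P^2|x-P|$ with $P^2<1$. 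Applying this inequality along the two subsequences $\{P_{2k}\}$ and $\{P_{2k+1}\}$, each obtained by iterating $g$, yields geometric convergence of both to $P$, and hence $P_t\to P$.

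For the stochastic part I would start from (\ref{eq:var}),
$$
S_t \;=\; \frac{1}{t}\sum_{i=1}^t \frac{\delta\, e_i e_i'}{\delta+P_{i-1}},
$$
and use the forecast error density (\ref{eq5}) to compute $\E(e_i e_i' \mid \Sigma, y^{i-1}) = (\delta+P_{i-1})\Sigma/\delta$. Consequently the random matrix $X_i := \delta e_i e_i'/(\delta+P_{i-1}) - \Sigma$ is conditionally centred, $\E(X_i\mid y^{i-1})=0$, so $\{X_i\}$ is a matrix-valued martingale difference sequence with respect to the filtration generated by the data. Because $e_i$ is conditionally Gaussian and the recursion immediately gives $0<P_i\leq 1/\delta$ for $i\geq 1$, every entry of $X_i$ has variance bounded uniformly in $i$ (only fourth Gaussian moments with bounded scale are involved). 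Martingale differences being uncorrelated, Chebyshev's inequality applied entrywise yields $t^{-1}\sum_{i=1}^t X_i\to 0$ in $L^2$, and hence in probability, which is exactly $S_t\pr \Sigma$.

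The main obstacle is the deterministic step. A naive one-step contraction argument on $f$ fails because $f$ is decreasing and $|f'(P)|=P^2$ can be arbitrarily close to $1$ when $\delta$ is small, so the orbit does not contract monotonically. The crux is to pass to the composition $g=f\circ f$ and to spot the algebraic identity $g(x)-P = P^2(x-P)/(\delta^2+\delta x+1)$; once that is in hand, both conclusions follow by standard arguments.
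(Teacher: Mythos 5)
Your proof is correct, and for the deterministic half it takes a genuinely different route from the paper. The paper proves $P_t\to P$ by showing $\{P_t\}$ is Cauchy: it telescopes $P_t-P_{t-1}$ into a product of factors $(\delta+P_{t-i})^{-1}$, establishes that eventually $\delta+P_{t-1}>1$, and deduces geometric decay of the successive differences. You instead pass to the two-step map $g=f\circ f$ with $f(x)=1/(\delta+x)$ and exploit the exact identity $g(x)-P=P^2(x-P)/(\delta^2+\delta x+1)$, which gives a uniform contraction factor $P^2<1$ on each of the even and odd subsequences; this is cleaner, yields an explicit rate, and avoids the paper's slightly delicate step of extracting a uniform lower bound $M>1$ from the product of denominators (and also sidesteps the paper's imprecise remark that $|P_t-P_{t-1}|\to 0$ alone would suffice for Cauchyness). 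For the stochastic half the two arguments share the same skeleton (conditional mean $\E(e_ie_i'\mid \Sigma, y^{i-1})=(\delta+P_{i-1})\Sigma/\delta$, hence unbiasedness, then variance of order $1/t$ and Chebyshev), but your observation that $X_i=\delta e_ie_i'/(\delta+P_{i-1})-\Sigma$ is a martingale difference array actually supplies a justification the paper omits: the paper's equation (A-1) writes $\var\{\mathrm{vech}(S_t)\}$ as a sum of individual variances with no cross terms, implicitly invoking the earlier informal remark that the forecast errors are ``approximately uncorrelated,'' whereas the martingale-difference structure makes the orthogonality of the summands exact. So your version buys a sharper, more self-contained argument at both stages, at the modest cost of the algebra needed to verify the identity for $g$.
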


From Theorem \ref{th2} the estimator $S_t$ is consistent and from
the proof of this theorem (given in the appendix), $S_t$ is also
unbiased estimator. Theorem \ref{th2} suggests that $P_{t-1}$ in
the calculation of $LBF(t)$ of equation (\ref{eq7}) can be
replaced by its limit $P$. From equation (\ref{eq:mt}) and Theorem
\ref{th2}, the forecast of $y_t$, $m_{t-1}$ can be approximated by
$$
m_{t-1}=m_0+\frac{P}{\delta+P}\sum_{i=1}^{t-1}e_i =
m_0+\frac{\sqrt{\delta^2+4}-\delta}{\sqrt{\delta^2+4}+\delta}\sum_{i=1}
^{t-1}e_i,
$$
where $P_{t-1}$ of equation (\ref{eq7}) is replaced by $P$. Figure
\ref{fig1} shows how fast $\{P_t\}$ converges to its limit $P$, for
a prior $P_0=1/1000$ and three values of $\delta$. This figure
points out that $P_t$ is bounded above by 1, but for $\delta=0.2$,
this bound is only achieved after $t>13$ (solid line in Figure
\ref{fig1}), while for $\delta=0.9$, this bound is achieved for any
$t>1$ (dotted line in Figure \ref{fig1}). This gives an empirical
indication of the speed of convergence of $\{P_t\}$, for several
values of $\delta$.

\begin{figure}[t]
 \epsfig{file=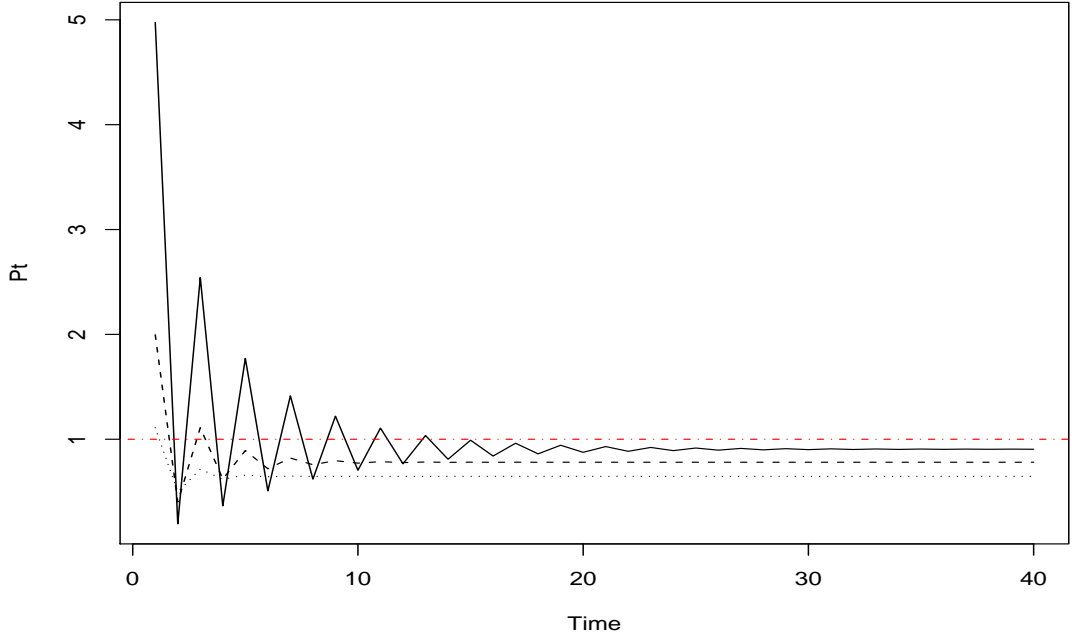, height=10cm, width=15cm}
 \caption{Rate of convergence for the sequence $\{P_t\}$ of Theorem \ref{th2}; the
 solid line plots $\{P_t\}$ for $\delta=0.2$, the dashed line plots $\{P_t\}$ for
 $\delta=0.5$, the dotted line plots $\{P_t\}$ for $\delta=0.9$ and the dashed/dotted line
 is the critical bound of 1.}\label{fig1}
\end{figure}

The limit $P$ is known before the algorithm starts (e.g. $P$
depends only on $\delta$) and, given enough data in Phase I, the
limit $\Sigma$ can be approximated by $\Sigma\approx S_{N^*}$, in
the end of Phase I. This can have an additional benefit on
computational savings, but more importantly it gives a theoretical
justification that the DWR produces a good copy of the process
$\{y_t\}$ and therefore this model is appropriate for the
monitoring part at Phase II of Algorithm \ref{alg3}. For example,
if $P_t$ and $S_t$ were not converging to stable values, no matter
how many data we collected at Phase I, the covariance matrix of
$y_t$ and thus its uncertainty would change over time resulting in
an unstable time series model. False alarms are probable in the
framework of such unstable models, which should be avoided.

In the design and application of the control chart it is important
to suggest values of $m_0$, $P_0$, $\delta$ and $S_0$ and to study
their sensitivity and influence to the performance of the proposed
control chart. Since these suggestions are related to forecasting
as in equation (\ref{eq5}), results on the sensitivity of such
prior parameters follow from Triantafyllopoulos and
Pikoulas$^{36}$ and Triantafyllopoulos$^{37}$. It is worthwhile
noting that, given enough data in Phase I, the values of $m_0$,
$P_0$ and $S_0$ are not critical to the forecast performance, as
in time series modelling prior information is deflated over time.
This is indicated in Theorem \ref{th2} from the fact that $P$ does
not depend on $P_0$. The value of $\delta$ can be critical in
forecasting and a general recommendation is that several values of
$\delta$ (in the range of $(0,1)$) are applied in Phase I and
according to the forecast performance (see Section \ref{s2}) a
value of $\delta$ is decided. One should note that high values of
$\delta$ (e.g. $\delta=0.9$) yield smooth forecasts with low
forecast variances, but these forecasts are sometimes unable to
forecast abrupt changes in the data; low values of $\delta$ (e.g.
$\delta=0.1$) yield more precise forecasts in the presence of
``wild data'', but these forecasts come with increased forecast
variances.

Our proposal for the modified EWMA control chart for the $LBF(t)$
process is motivated from the fact that the observations $LBF(t)$
possess autocorrelation and non-normality. The approach is
model-based, and so a comparison with traditionally used
multivariate control charts, such as the Hotelling's $T^2$ and the
M-EWMA (which are both data-based control charts), is difficult
and in many occasions it can not give justice. Within the
model-based control charting methods, it appears that our approach
can be compared with the residual chart (Pan and Jarrett$^{24}$),
but again the comparisons need to make sure that model uncertainty
(whether for example the DWR is a good model or an alternative
time series model performs better) should be ideally removed
before any comparison is attempted. For example a
miss-specification of a time series model might result to a false
result in the comparison of the competing control charts. From our
experience the DWR works generally well (since it is a
generalization of the Shewhart-Deming model), but this might not
be the case for every multivariate process. We believe that such a
comparison should deserve the length and the detail of a whole
paper and thus here we do not pursue this project. Next we give
two examples illustrating the design and application of the
proposed control chart.

To the above we have assumed that given a process $\{y_t\}$ the
interest is in building a control chart for monitoring
simultaneously the process mean and the dispersion covariance
matrix. However, in some cases the interest is placed on monitoring
the dispersion covariance matrix only. In this case we can modify
the control scheme by considering a modified EWMA control chart of
the log-Bayes' factors of the first order difference process
$z_t=y_t-y_{t-1}$, which from equation (\ref{eq2}) has zero mean.
Control charts based on $\{z_t\}$ will be more robust as compared to
those for $\{y_t\}$, since the uncertainty of monitoring the process
mean of $\{y_t\}$ has been removed.

\section{London Metal Exchange Data}\label{s5s2}

London metal exchange (LME) is the world's premier non-ferrous
metals market trading currently aluminium, copper, lead and zinc,
among other non-ferrous metals. Information on the LME and its
functions can be found in its web site: {\tt
http://www.lme.co.uk}. The review of Watkins and McAleer$^{48}$
explores the recently growing literature on the LME market and
Triantafyllopoulos$^{37}$ discusses the correlation of spot and
future contract prices of aluminium based on the DWR model of
Section \ref{s2}. In this paper we discuss data of spot prices for
the four metals aluminium (variable $\{y_{1t}\}$), copper
(variable $\{y_{2t}\}$), lead (variable $\{y_{3t}\}$) and zinc
(variable $\{y_{4t}\}$).

\begin{figure}[t]
 \epsfig{file=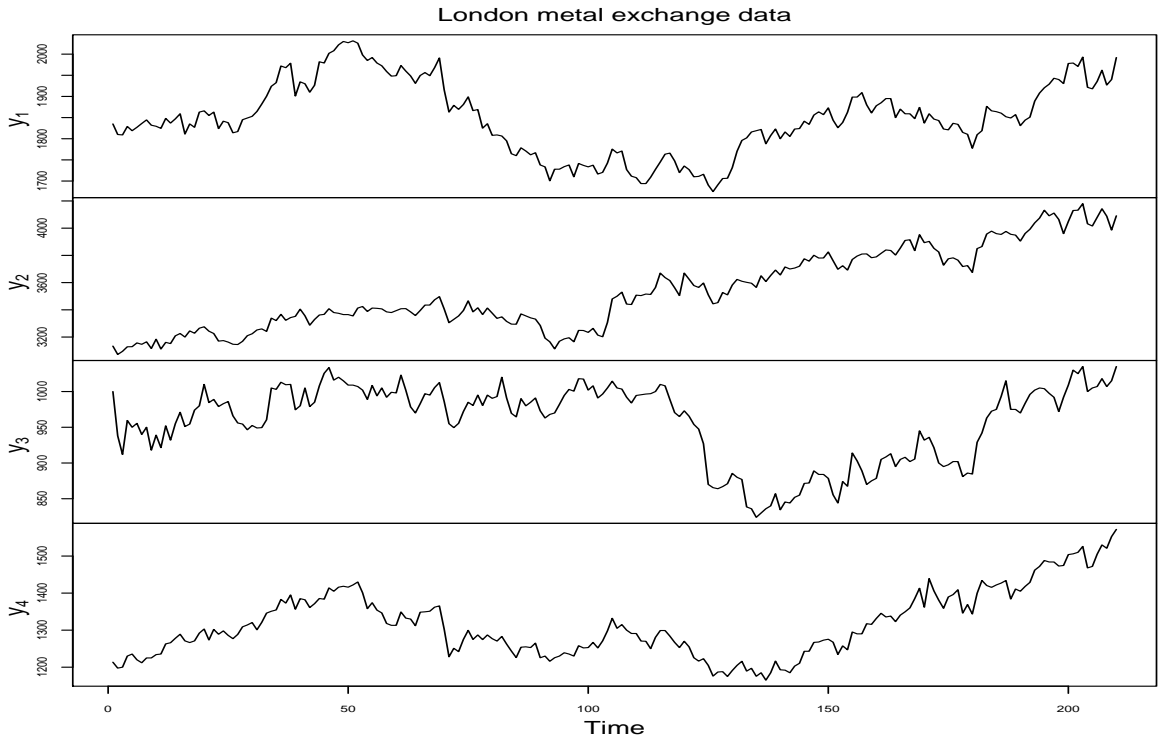, height=10cm, width=17cm}
 \caption{LME data $y_t=[y_{1t}~y_{2t}~y_{3t}~y_{4t}]'$, consisting of aluminium $(\{y_{1t}\})$,
 copper $(\{y_{2t}\})$, lead $(\{y_{3t}\}$ and zinc $(\{y_{4t}\})$ spot prices (in US dollars
 per tonne of each metal).}\label{fig11}
\end{figure}

\begin{figure}[t]
 \epsfig{file=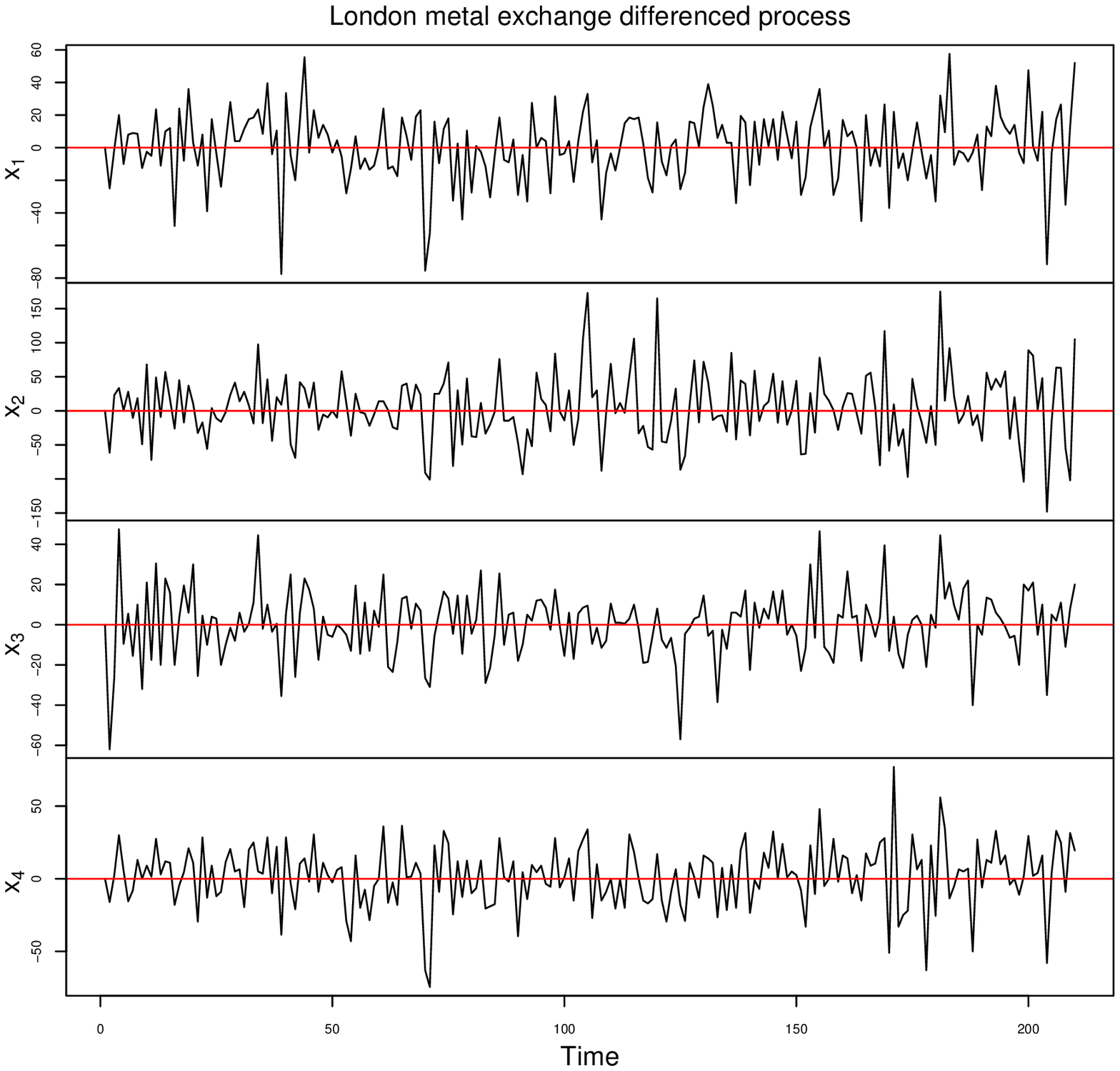, height=10cm, width=17cm}
 \caption{LME differenced process $x_t=[x_{1t}~x_{2t}~x_{3t}~x_{4t}]'$, consisting of
 aluminium $(\{x_{1t}\})$, copper $(\{x_{2t}\})$, lead $(\{x_{3t}\})$ and zinc $(\{x_{4t}\})$.
 The horizontal lines, placed at zero, indicate no volatility.}\label{fig12}
\end{figure}

The data are collected from January 2005 until October 2005 for
every trading day excluding weekends and bank holidays; Figure
\ref{fig11} plots the data. We form the observation vector
$y_t=[y_{1t}~y_{2t}~y_{3t}~y_{4t}]'$ and we are interested in
knowing whether volatility is apparent, for $t=151$ until $t=220$.
In other words we want to know whether from $t$ to $t+1$, the
variability of the observations $y_t$ and $y_{t+1}$ has changed.
This is a major concern to econometricians, because if there is
evidence for volatility, this means there is uncertainty in
investments and ideally the volatility should be understood and
explained. In order to answer this important question we form the
first order difference of the series $\{y_t\}$, defined by
$x_t=y_t-y_{t-1}$, for $t>1$ (Figure \ref{fig12}). Adopting the
usual forecasting strategy of commodity forecasting, given data up
to time $t-1$, the forecast mean of $y_t$ at time $t$ is just the
value of $y_{t-1}$ and so we can write $\E(y_t|y^{t-1})=y_{t-1}$. We
note that the true mean of $x_t$ may not be zero (unless in model
(\ref{eq2}) it is $\mu_t=\mu+\omega_t$), but it is true that
conditionally on $y^{t-1}$ or $y_{t-1}$ we have
$\E(x_t|y^{t-1})=\E(y_t)-y_{t-1}=0$, since
$\E(y_t|y^{t-1})=y_{t-1}$. From Figure \ref{fig12} we observe that
the series $\{x_t\}$ fluctuates around zero and volatility can be
detected as significant deviations from the zero target; such
deviations can be detected with the aid of a control chart of
Section \ref{s2s2}.

\begin{figure}[t]
 \epsfig{file=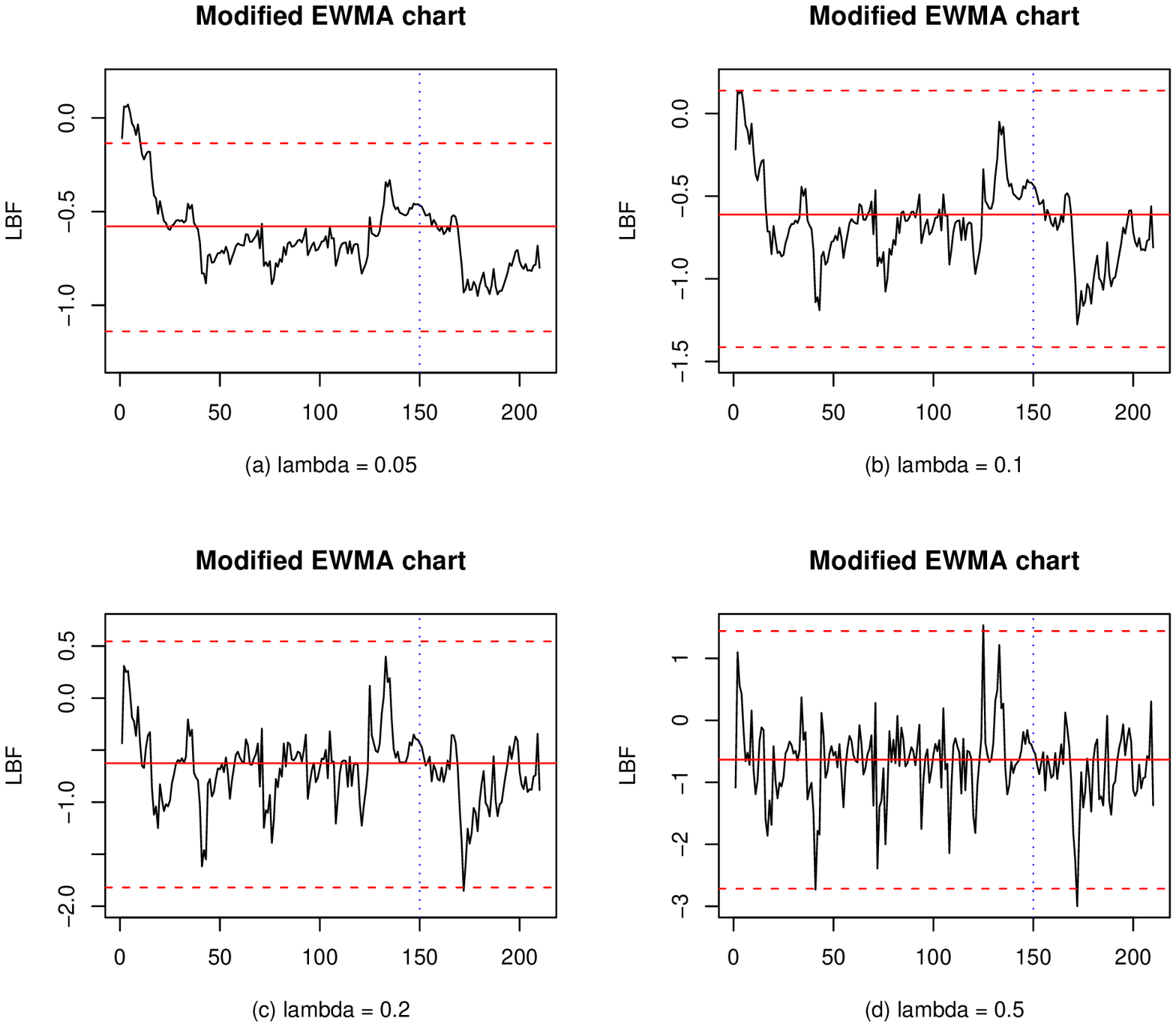, height=9cm, width=15cm}
 \caption{Modified EWMA control chart for the log Bayes' factor of the LME differenced process.
 Plots (a)-(d) show the modified control chart for different values of the
smoothing parameter $\lambda$. In each plot of the panel, the solid
 horizontal line indicates the mean of the EWMA and the dotted
 horizontal lines indicate the control limits; the vertical line separates
Phase I (for $t=1-150$) and Phase II (for $t=151-210$).
}\label{fig13}
\end{figure}

First we need to make sure that the DWR model fits the differenced
series $\{x_t\}$ well. We take $t=1-150$ as Phase I, in which the
adequacy of the DWR model is evaluated. The performance statistics
of Section \ref{s2} are: $MSSE=[0.993~ 1.486~ 0.866~ 1.323]'$ and
$MAE=[18.932~ 45.187~ 14.569~ 19.082]'$, suggesting an acceptable
fit. Of course the MAPE is not available, since $\{x_t\}$ is not a
positive valued process (Section \ref{s2}).

We have designed a modified EWMA control chart for the $LBF(t)$ of
the process $\{x_t\}$ according to the discussion of Section
\ref{s2s2}. Figure \ref{fig13} shows four control charts
corresponding to four values of the EWMA smoothing parameter
$\lambda$. Typically the control chart is robust to normality for
small values of $\lambda$, but for these values the control chart
is only detecting very small drifts in the mean this might not be
desirable. As $\lambda$ increases the modified EWMA control chart
is losing its robustness over normality, but for symmetric process
distributions, such as the empirical distribution of the $LBF(t)$
shown in Figure \ref{hist1}, the EWMA control chart might still be
used for $\lambda=0.5$. The correlation of the $LBF(t)$ is
accounted by the autoregressive model of Section \ref{s2s2} and an
analysis involving the data at Phase I shows that an the
autoregressive parameter $\phi=0.1$ is adequate to capture the
autocorrelation of $LBF(t)$. According to Tables for the $ARL$ of
the modified EWMA control chart (see e.g. Shiau and Hsu$^{49}$) we
choose the value of $c$ in equation (\ref{cl}) so that
$ARL=370.4$, e.g. for $\lambda=0.05$ and $\phi=0.1$ we have
$c=2.469$. The remainder of the control limits are calculated as
in equation (\ref{cl}).

Figure \ref{fig13} shows that the process in Phase II appears to
be in control, for $\lambda=0.05$ and $\lambda=0.1$, while for
$\lambda=0.2$ and $\lambda=0.5$ the control chart returns an out
of control point at $t=172$ (with values $z_{172}=-1.852$ and
$z_{172}=-2.999$, respectively). The mean of the EWMA $z_t$ is
slightly lower than zero, which indicates that, for the entire
process $\{x_t\}$, there will be some deviation of the predictive
density $f_e$ from the target density $f_\varepsilon$. It is up to
the modeller to decide whether such deviation from the target
distribution is worth of declaring the process out of control. In
search of a more automatic approach, one can lift up the whole
control chart so that in Phase I the mean of $z_t$ is exactly
zero. This can be performed automatically, in the end of Phase I,
and this will declare the process in control in Phase II, for
$\lambda=0.05,0.1$, while for $\lambda=0.2,\lambda=0.5$ there is
an out of control point at $t=172$. In Figure \ref{fig13} the
value of $\lambda=0.5$ is rather high to ensuring correct control
limits of the modified EWMA chart (see the relevant discussion in
page \pageref{cl}); here the chart with $\lambda=0.5$ is mainly
shown for comparison purposes with the charts with lower values of
$\lambda$, but in practice we suggest that $\lambda$ does not
exceed 0.2, unless there is strong evidence to support the
assumption of normality for the distribution of $LBF(t)$. It is
worth pointing out that the concentration of consecutive EWMA
values under the mean in Phase II is causing warning, which is
apparent in all charts. The phenomenon is more apparent in the
charts for $\lambda=0.05$ and $\lambda=0.1$ and it can suggest the
out of control state of the process at $t=172$,which is apparent
in the charts with $\lambda=0.2$ and $\lambda=0.5$. The
interpretation of the out of control signal at $t=172$ can not be
done just by looking at Figure \ref{fig12} and more dedicated
methods of out of control variable identification need to be
employed, see e.g. Bersimis {\it et al.}$^{14}$.

\section{Production Time Series Data}\label{s5s1}

In an experiment of production of a plastic mould the quality is
centered on the control of temperature and its variation. For this
purpose five measurements of the temperature of the mould have
been taken, for $276$ time points. The experiment is fully
described in Pan and Jarrett$^{24}$ and these authors show that
this 5-dimensional production process $\{y_t\}$ is both
autocorrelated and serially correlated including both vector
autoregressive and moving average terms. These authors use a
vector state space charting approach based on the Hotelling
control chart resulting on 12 out of control signals at Phase II
(time points from $t=181$ to $t=220$) and hence concluding that
the process falls badly out of control at Phase II.

\begin{figure}[t]
 \epsfig{file=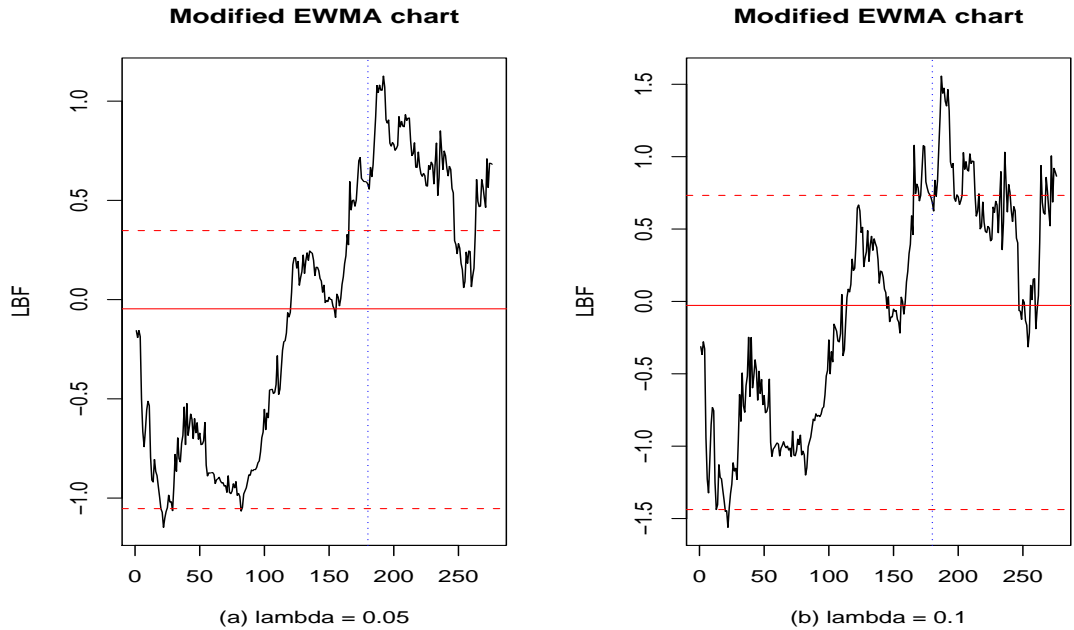, height=9cm, width=15cm}
 \caption{Modified EWMA control chart for the log Bayes' factor of the Production process. Plots (a)-(b)
 show two charts for values of the smoothing parameter $\lambda=0.05$ and $\lambda=0.1$. For
 both plots, the solid horizontal
 line indicates the target mean 0 and the dotted horizontal lines indicate the
 control limits; the solid vertical line separates Phase I (for $t=1-180$) and Phase II
 (for $t=181-276$).}\label{fig10}
\end{figure}

We have used the data at Phase I (time points $t=1-180$) in order to
estimate the target mean vector $\mu=[208.245~ 153.638~ 53.063~
-22.742~ 16.126]'$ (as the average of each $y_{it}$: $t=1-180$) and
the dispersion covariance matrix
$$
V=\left[\begin{array}{ccccc} 0.168 & -0.001 & 0.633 & -0.438 & 0.015
\\ -0.001 & 0.023 & -0.017 & 0.006 &
-0.002 \\  0.633 & -0.017 & 25.621 & -15.658 & 0.453 \\
-0.438 & 0.006 & -15.658 & 14.181 & -0.596 \\ 0.015 & -0.002 & 0.453
& -0.596 & 0.951 \end{array}\right]
$$
(as the sample covariance matrix of each $y_t$: $t=1:180$), where
$y_t=[y_{1t},y_{2t},y_{3t},y_{4t},y_{5t}]'$. The DWR fits well with
$MSSE=[0.855~ 0.950~ 0.992~ 1.161~ 0.996]'$, which is close to
$[1~1~1~1]$. The other two performance statistics are $MAE=[1.378~
0.899~ 4.450~ 3.316~ 0.945]'$ and $MAPE=[0.007~ 0.006~ 0.089~ -~
0.059]'$, where for $\{y_{4t}\}$ the ``--'' indicates that the MAPE
is not available, since this variable is not positive valued (see
the relevant discussion for MAPE in Section \ref{s2}). The above
performance statistics suggest that the model fit is good and
therefore we can proceed with control charting at Phase II
($t=181-279$).

The first thing to do is to find a suitable AR(1) model for the
process $LBF(t)$. A suitable model is the AR(1): $LBF(t)=-4.624 +
0.062 LBF(t-1)+\nu_t$. According to the discussion above, we
remove the intercept $-4.624$ so that we can obtain a in-control
process in Phase I. Thus we design the modified EWMA control chart
for $LBF(t)+4.624$. Again we use tables for the modified EWMA
control chart and for $\lambda=0.05$ the resulting control chart
is given in Figure \ref{fig10}. This figure agrees with the
residual chart of Pan and Jarrett$^{24}$, that finds the process
in Phase II out of control for most of the data points. In Phase I
chart of panel (b) of Figure \ref{fig10} gives one out of control
point, which is in agreement with Pan and Jarrett$^{24}$, but in
panel (a) of Figure \ref{fig10} the control chart detects more out
of control points in Phase I. The EWMA control chart is robust to
non-normality for the low values of $\lambda=0.05$ and
$\lambda=0.1$, but for $\lambda=0.05$ the chart is more sensitive
to small shifts in the mean of $LBF(t)$, resulting to the
detection of out of control points in Phase I. Any out of control
points in Phase I should be immediately investigated and usual SPC
procedures of removing influence of these points in the
calculation of the control limits should be applied
(Montgomery$^1$).

\section{Conclusions}\label{s6}

This paper develops a new multivariate control chart based on
Bayes' factors. This control chart is specifically aimed at
multivariate autocorrelated and serially correlated processes. The
general idea is to form a target distribution, to construct a
predictive density with good forecast ability and then to apply a
univariate control chart for the logarithm of the Bayes' factor of
the predictive error density against the target error density.
Although in this paper, for simplicity, we have considered normal
distributions for the target and the predictive densities, in
general application the proposed control charts can be applied
considering other densities too as long as they are available in
analytic form.

We have restricted our discussion to the modified EWMA control
chart, but other control charts such as the modified CUSUM and
non-parametric control charts can be applied. A major advantage of
our approach as compared to other multivariate control charts is
that once we have obtained the log Bayes' factors we can apply any
appropriate univariate control chart. A difficulty appears to be
that the resulting Bayes' factors process is both autocorrelated and
non-normal, but we believe the design of the proposed chart is a
challenge that can attract and motivate further research in this so
important area of statistical process control.

\section*{Acknowledgements}

I should like to thank the editor Erik M{\o}nness and two anonymous
referees for making several valuable suggestions, which considerably
improved the paper.

\renewcommand{\theequation}{A-\arabic{equation}} 
\setcounter{equation}{0}  

\section*{Appendix}  

\begin{proof}[Proof of Theorem \ref{th2}]
First we prove $S_t\pr \Sigma$. It suffices to prove that $S_t$ is
unbiased estimator and that its covariance matrix converges to zero.
From equations (\ref{eq:var}) and (\ref{eq5}) we obtain
$$
\E(S_t)=\frac{1}{t}\sum_{i=1}^t\frac{\delta
\E(e_ie_i')}{\delta+P_{i-1}}=\frac{1}{t}\sum_{i=1}^t\frac{\delta
(\delta+P_{i-1})\Sigma}{(\delta+P_{i-1})\delta}=\frac{1}{t}(t\Sigma)=\Sigma
$$
and so $S_t$ is unbiased for $\Sigma$. For the convergence, let
$\textrm{vech}(\cdot)$ denote the column stacking operator of a
lower portion of a covariance matrix and let $\parallel
\cdot\parallel$ denote a matrix norm defined in a suitable linear
space. From equation (\ref{eq5}) we have
\begin{equation}\label{eq:app1}
\var\{\textrm{vech}(S_t)\}=\frac{1}{t^2}\sum_{i=1}^t\left(\frac{\delta}{\delta+
P_{i-1}}\right)^2 \var\{\textrm{vech}(e_ie_i')\}.
\end{equation}
From equation (\ref{eq5}) $e_i$ follows a $p$-variate normal
distribution and so by writing $e_i=[e_{i1}~e_{i2}~\cdots~e_{ip}]'$,
we have that $\cov(e_{ij},e_{ik})=\E(e_{ij}e_{ik})$ are bounded,
since these expectations are expressed as moments of the
multivariate normal distribution (Triantafyllopoulos$^{50}$). Hence
$\var\{\textrm{vech}(e_ie_i')\}$ has finite elements and so we can
write $\parallel \var\{\textrm{vech}(e_ie_i')\} \parallel < M$, for
some $M>0$. For any $\epsilon>0$ define $t_0=[\epsilon M]$ (the
integral part of $\epsilon M$). From $P_{i-1}>0$ we have that
$\delta/(\delta+P_{i-1})<1$, for all $i=1,2,\ldots t$. Then
\begin{eqnarray*}
\left\| \var\{\textrm{vech}(S_t)\}\right\| &=& \frac{1}{t^2} \left\|
\sum_{i=1}^t \left(\frac{\delta}{\delta+P_{i-1}}\right)^2
\var\{\textrm{vech}(e_ie_i')\} \right\| \\ &\leq & \frac{M}{t^2}
\left\| \sum_{i=1}^t\left(\frac{\delta}{\delta+
P_{i-1}}\right)^2\right\| \\ &\leq &
\frac{tM}{t^2}=\frac{M}{t}<\epsilon,
\end{eqnarray*}
for any $t>t_0$. This shows that
$\lim_{t\rightarrow\infty}\var\{\textrm{vech}(S_t)\}=0$ and so
$S_t\pr \Sigma$.

Proceeding now with $\{P_t\}$ we show that $\{P_t\}$ is a Cauchy
sequence in the real line and hence
$\lim_{t\rightarrow\infty}P_t=P$ exists. To prove that $\{P_t\}$
is a Cauchy sequence, it suffices to prove that
$\lim_{t\rightarrow\infty}|P_t-P_{t-1}|=0$, where $|\cdot|$
denotes absolute value. First we show that exists positive integer
$t_0$ such that for all $t>t_0$ it is $P_t<1$. The proof of this
is by contradiction. Suppose that for all $t_0$ exists $t>t_0$
such that $P_t\geq 1$. Without loss in generality take $t_0=t^*$
and $P_{t^*}=1$. Then we see that
$P_{t^*+1}=1/(\delta+P_{t^*})=1/(\delta+1)<1$,
$P_{t^*+2}=1/(\delta+P_{t^*+1})=(\delta+1)/(\delta^2+\delta+1)<1$
and likewise $P_{t^*+k}<1$, for all $k\geq 1$. So we can pick
$t_0=t^*+1$ so that we can not find any $t>t_0$ with $P_t\geq 1$,
which contradicts the hypothesis. Thus exists $t_0>0$ so that for
all $t>0$ it is $P_t<1$. This in turn implies that
\begin{equation}\label{eqapp1}
\delta+P_{t-1}>1,\quad \forall~ t>t_0.
\end{equation}
From the definition of $P_t$ of equation (\ref{eq:mt}), we obtain
$$
P_t-P_{t-1}=\frac{1}{\delta+P_{t-1}}-\frac{1}{\delta+P_{t-2}}=-\frac{P_{t-1}-P_{t-2}}{(\delta
+P_{t-2})(\delta+P_{t-2})}=\cdots
=\frac{(-1)^{t-1}(P_1-P_0)}{\prod_{i=1}^{t-1}(\delta+P_{t-i})(\delta+P_{t-i-1})}.
$$
Now pick $t_0$ as in (\ref{eqapp1}) and define $M=\min\{
\delta+P_{t-1},(\delta+P_{t-2})^2,\ldots,(\delta+P_{t_0+1})^2\}$
so that $M>1$. Then
$$
|P_t-P_{t-1}|=\frac{|1-\delta P_0-P_0^2|}{\prod_{i=0}^{t_0}(\delta
+P_i)^2 \prod_{i=1}^{t-t_0-2}(\delta+P_{t-1})(\delta +
P_{t-i-1})}<\frac{|1-\delta P_0-P_0^2|}{\prod_{i=0}^{t_0}(\delta
+P_i)^2M^{t-t_0-1}} \rightarrow 0,
$$
since $\lim_{t\rightarrow\infty}M^{t-t_0-1}=+\infty$. This proves
that $\lim_{t\rightarrow\infty}|P_t-P_{t-1}|=0$ and so $\{P_t\}$
is a Cauchy sequence. Thus $\lim_{t\rightarrow\infty}P_t=P$ exists
and from equation (\ref{eq:mt}) we have $P=1/(\delta+P)$, for
which we derive $P=(\sqrt{\delta^2+4}-\delta)/2$, after rejecting
the negative root $P=(-\sqrt{\delta^2+4}-\delta)/2$.
\end{proof}

\end{document}